\documentclass[conference,letterpaper]{IEEEtran}
\addtolength{\topmargin}{9mm}

\usepackage[utf8]{inputenc} 
\usepackage[T1]{fontenc}
\usepackage{url}
\usepackage{ifthen}
\usepackage{cite}
\usepackage[cmex10]{amsmath} 
\usepackage{amsmath, amssymb, amsthm}
\usepackage{bm}
\usepackage{bbm}
\usepackage{color}
\usepackage{graphicx}
\usepackage{mathrsfs}
\usepackage{xcolor}
\usepackage{caption}
\usepackage{subcaption}

\interdisplaylinepenalty=2500

\newcommand{\lbp}{\left\{}
\newcommand{\rbp}{\right\}}

\newcommand{\lnorm}{\left\lVert}
\newcommand{\rnorm}{\right\rVert}
\newcommand{\mv}{\,\middle\vert\,}

\newcommand{\mcal}{\mathcal}
\newcommand{\mscr}{\mathscr}

\newcommand{\mb}{\mathbf}
\newcommand{\mbb}{\mathbb}

\newcommand{\lfl}{\left\lfloor}
\newcommand{\rfl}{\right\rfloor}

\newcommand{\E}{\mathbb{E}}

\renewcommand{\Pr}{\mathbb{P}}

\newtheorem{theorem}{Theorem}

\newtheorem{lemma}{Lemma}
\newtheorem{proposition}{Proposition}
\newtheorem{corollary}{Corollary}

\theoremstyle{definition}
\newtheorem{definition}{Definition}
\theoremstyle{remark}

\newtheorem{claim}{Claim}

\hyphenation{op-tical net-works semi-conduc-tor}



\begin{document}
\title{Improved Bounds on Access-Redundancy Tradeoffs in Quantized Linear Computations}

\author{%
\IEEEauthorblockN{Ching-Fang Li}
\IEEEauthorblockA{
Department of Electrical Engineering\\
Stanford University, CA, USA\\
Email: \url{cfli@stanford.edu}
}
\and
\IEEEauthorblockN{Mary Wootters}
\IEEEauthorblockA{
Departments of Computer Science and Electrical Engineering\\
Stanford University, CA, USA\\
Email: \url{marykw@stanford.edu}
}
}

\maketitle

\begin{abstract}
Consider the problem of computing quantized linear functions with only a few queries.  Formally, given $\mb x\in \mbb R^k$, our goal is to encode $\mb x$ as $\mb c \in \mbb R^n$, for $n > k$, so that for any $\mb w \in \mathcal{A}^k$, $\mb w^T \mb x$ can be computed using at most $\ell$ queries to $\mb c$.  Here, $\mathcal{A}$ is some finite set; in this paper we focus on the case where $|\mathcal{A}| = 2$.  

Prior work \emph{(Ramkumar, Raviv, and Tamo, Trans. IT, 2024)} has given constructions and established impossibility results for this problem.  We give improved impossibility results, both for the general problem, and for the specific class of construction (block construction) presented in that work.  The latter establishes that the block constructions of prior work are optimal within that class.

We also initiate the study of \emph{approximate} recovery for this problem, where the goal is not to recover $\mb w^T \mb x$ exactly but rather to approximate it up to a parameter $\varepsilon > 0$.  We give several constructions, and give constructions for $\varepsilon = 0.1$ that outperform our impossibility result for exact schemes.
\end{abstract}

\section{Introduction}\label{sec:intro}
In this paper we consider the problem of computing quantized linear functions of data with a few queries.  Formally, consider the set of linear functions $\mathcal{F}_{\mathcal{A}} = \{f_{\mb w} \mid \mb w \in \mathcal{A}^k\}$ for some finite set $\mathcal{A}$, where $f_{\mb w}:{\mbb R}^k \to \mbb R$ is given by $f_{\mb w}(\mb x) = {\mb w}^T{\mb x}.$  The goal is to encode $\mb x \in \mbb R^k$ into some $\mb y \in \mbb R^n$, for $n \geq k$, so that, for all $f \in \mathcal{F}$, it is possible to recover $f(\mb x)$ given only $\ell \leq n$ queries to the symbols of $\mb y$.  
The parameters of interest are the \emph{redundancy ratio} $\nu = n/k$, which captures the amount of overhead in the encoding; and the \emph{access ratio} $\lambda = \ell/k$, which captures the query complexity.  We would like both to be as small as possible.  If there is an infinite family of coding schemes that achieves the parameters $(\nu, \lambda)$, we say that this pair is \emph{feasible} (see Definition~\ref{def:feasible}).
Understanding the set of feasible pairs, and coming up with code constructions that achieve them, is motivated by applications in distributed ML and coded computation. (See the discussion in \cite{ramkumar_access-redundancy_2023, RRT23_coeff_complexity, RRT24_nonbinary}. Briefly, we imagine that $\mb x$ is stored by storing the symbols of $\mb y$ each at different nodes in a distributed system.  Our goal is to compute $f(\mb x)$ while accessing as few nodes as possible).

This problem has been studied for various families $\mathcal{F}$ over various domains.  If $\mb x \in \mbb F^k$ for some finite field $\mbb F$, and if $\mathcal{A} = \mbb F$, then the problem is equivalent to the existence of linear covering codes (see \cite{cohen1997covering} and the discussion in \cite{ramkumar_access-redundancy_2023}).  Over $\mbb R$, \cite{HoAgrawal_98} showed how to use covering codes (in a different way) to obtain nontrivial constructions for $\mathcal{A} = \{0,1\}$.  This was later improved by \cite{ramkumar_access-redundancy_2023} for any $\mathcal{A} \subseteq \mbb R^k$ with $|\mathcal{A}| = 2$: That work showed that the problem is equivalent for any $\mathcal{A}$ with $|\mathcal{A}| = 2$, and provided both improved constructions and impossibility results.  In further work, \cite{RRT23_coeff_complexity, RRT24_nonbinary} investigated the setting where $|\mathcal{A}| > 2$.

\textbf{Our Contributions.}
In this paper, we work in the setting of \cite{HoAgrawal_98, ramkumar_access-redundancy_2023}, where $|\mathcal{A}| = 2$.  As in prior work, we focus on schemes with linear encoding and decoding maps. Our main contributions are the following.
\begin{itemize}
    \item \textbf{Improved Lower Bounds on $\nu, \lambda$.}
    In \cite{ramkumar_access-redundancy_2023}, an impossibility result is proved on the set of feasible points $(\nu, \lambda).$ Using \emph{covering designs}, we are able to improve this bound (Theorem~\ref{thm:lower_bound}; see also Corollary~\ref{cor:main}).  A comparison of the our new bound with the bound from prior work is shown in Figure~\ref{fig:compare}.

    We further prove specialized lower bounds for the class of constructions in \cite{ramkumar_access-redundancy_2023}.  That work presents a class of \emph{block constructions}, described more below, where $\mb x \in \mbb R^k$ is broken up into $m$ blocks in $\mbb R^{k_0}$, and each block is encoded separately.  We prove a lower bound for any block construction (Theorem~\ref{thm:block}), and we show that the constructions presented in \cite{ramkumar_access-redundancy_2023} in fact meet these bounds.  We also present a new block construction (given in \eqref{eq:new_construction} and also shown in Figure~\ref{fig:compare}) to demonstrate that other block constructions on our lower bound are achievable.

    \item \textbf{Initiating the study of approximate recovery.}  Prior work has focused on recovering $f(\bm x)$ exactly for all $f \in \mathcal{F}$.  In this work, we consider a relaxation (again for quantized linear functions with $|\mathcal{A}|=2$), where it is sufficient to estimate $f(\mb x)$ up to error $\varepsilon$.  We give three types of constructions for such approximation schemes, and show that even for $\varepsilon = 0.1$, it is possible to outperform our improved lower bound for exact reconstruction.  Our constructions are shown in Figure~\ref{fig:approx}.
\end{itemize}

\textbf{Outline.} In Section~\ref{sec:formulation}, we set notation and formally define the problem, as well as formally present a few of the results in \cite{ramkumar_access-redundancy_2023} that will be relevant for our work.  In Section~\ref{sec:lowerbound}, we present our improved general lower bound (Theorem~\ref{thm:lower_bound}), as well as our lower bound for block constructions (Theorem~\ref{thm:block}).  In Section~\ref{sec:approx}, we set up the approximate recovery problem, and present our constructions.

\section{Preliminaries}\label{sec:formulation}
All vectors are column vectors. 
For a vector $\mb x\in \mbb R^k$, the entries are written as $(x_1,x_2,\dots,x_k)$. 
We use $\lnorm\mb x\rnorm_0$ to denote the number of nonzero entries of $\mb x$. 
We let $[n]=\{1,2,\dots,n\}$. For a finite set $\mcal T$ and an integer $0\leq \ell\leq |\mcal T|$, $\binom{\mcal T}{\ell}$ refers to the collection of all size-$\ell$ subsets of $\mcal T$. For a set $\mcal S \subseteq [n]$ and a matrix $\mb D$ with $n$ columns, $\mb D|_{\mcal S}$ denotes the set of columns of $\mb D$ indexed by $\mcal S$.  Logarithms are of base $2$ if not otherwise specified.  

\subsection{Problem Statement}
As stated in the introduction, our goal is to design an encoding map that maps $\mb x \in \mbb R^k$ to $\mb y \in \mbb R^n$. 
We will restrict to linear codes, meaning that the encoding map is given by $\mb y = \mb D^T \mb x$ 
for some matrix $\mb D \in \mbb R^{k \times n}$.
For any function $f:\mbb R^k \to \mbb R$ in some function class $\mathcal{F}$, we would like to be able to recover $f(\mb x)$ given at most $\ell$ queries to the symbols of $\mb y$. We focus on quantized linear computations by considering the function class $\mcal F_{\mcal A}=\lbp f_{\mb w} \mv \mb w\in\mcal A^k \rbp$ where $f_{\mb w}(\mb x) = \mb w^T \mb x$,
for some finite coefficient set $\mcal A\subset\mbb R$.
We also restrict our attention to linear decoding maps.  That is, for each $f \in \mcal F_{\mcal A},$ we want that $f(\mb x)$ can be represented as some linear combination of 
at most $\ell$ entries of $\mb y.$
Since everything we wish to compute is linear, we may rephrase the problem in terms of linear algebra.  Formally, we define a $\mcal A$-protocol as follows.

\begin{definition}[$\mcal A$-protocol]\label{def:protocol}
Fix a finite set $\mcal A \subseteq \mbb R$, and let $s = |\mcal A|^k$.  Let $\mb W \in \mbb R^{k \times s}$ be a matrix that has each $\mb w \in \mcal A^k$ as a column. A \emph{$\mcal A$-protocol} with parameters $(k,n,\ell)$ consists of an encoding matrix $\mb D \in \mbb R^{k \times n}$ and a decoding matrix $\mb A \in \mbb R^{n \times s}$, so that $\mb W = \mb D \mb A,$ and so that $\|\mb a\|_0 \leq \ell$ for every column $\mb a$ of $\mb A$.
\end{definition}

To see why Definition~\ref{def:protocol} captures the problem above, note that any $\mb w \in \mcal A^k$ appears as a column of $\mb W$.  To compute $f_{\mb w}(\mb x)$, let $\mb a \in \mbb R^{n}$ be the corresponding column of $\mb A$.  Then $f_{\mb w}(\mb x) = \mb w^T \mb x = \mb a^T \mb D^T \mb x = \mb a^T \mb y,$ and $\mb a$ has at most $\ell$ nonzero entries.  

Given $k$, we want both $n$ and $\ell$ to be small. 
Following \cite{ramkumar_access-redundancy_2023}, we study the \emph{access ratio} $\ell/k$ and the \emph{redundancy ratio} $n/k$ in the asymptotic regime.
\begin{definition}[Feasible pair]\label{def:feasible}
    We say that a pair $(\nu,\lambda)\in\mbb R^2$ is $\mcal A$-feasible if there exists an infinite family of $\mcal A$-protocols with parameters $\{(k_i,n_i,\ell_i)\}_{i\geq 1}$ (with $k_i$’s strictly increasing) such that $\lim_{i\to\infty} (n_i/k_i, \ell_i/k_i) = (\nu,\lambda)$. 
\end{definition}
\noindent It is shown in \cite{ramkumar_access-redundancy_2023} that taking any $\mcal A$ with size two is equivalent in the sense that the resulting feasible pairs are the same. Hence for the rest of the paper, we take $\mcal A=\{\pm 1\}$.
The goal is to characterize the region of feasible pairs.
In other words, we aim to find the optimal tradeoff between the access and redundancy ratios.

\subsection{Prior Work}\label{sec:related}
As noted in the introduction, this and similar problems have been studied before~\cite{HoAgrawal_98, ramkumar_access-redundancy_2023,RRT23_coeff_complexity,RRT24_nonbinary}. 
 The work most relevant to ours is that of Ramkumar, Raviv and Tamo in~\cite{ramkumar_access-redundancy_2023}. Here, we briefly review the technical details of their results, since they will be relevant for ours.

\subsubsection{Block Construction of \cite{ramkumar_access-redundancy_2023}}\label{subsubsec:construct}
One of the constructions of \cite{ramkumar_access-redundancy_2023} is based on covering codes.
Given a covering code $\mcal C \subseteq \{\pm 1\}^{k_0}$ of length $k_0$ over alphabet $\{\pm 1\}$, 
the \emph{covering radius} is 
\[\textstyle r=r(\mcal C)=\min\lbp r'\mv \bigcup_{\mb c\in\mcal C}B(\mb c, r')=\{\pm 1\}^{k_0} \rbp,\]
where $B(\mb c, r')$ is the Hamming ball centered at $\mb c$ with radius $r'$.
Let $\hat{\mcal C}\subseteq\mcal C$ be a set that contains exactly one of $\pm \mb c$ 
for each $\mb c\in\mcal C$, and let $\hat c=|\hat{\mcal C}|$.

It is shown in \cite[Theorem 2]{ramkumar_access-redundancy_2023} that the pair $(\frac{k_0+\hat c}{k_0}, \frac{r+1}{k_0})$ is feasible. (Notice that if $\mcal C$ is closed under complement (i.e. $\mb c\in\mcal C$ if and only if $-\mb c\in\mcal C$), then $\hat c=|\mcal C|/2$.)  To show this, they use the following construction, which divides the data $\mb x$ into blocks.  
Given data $\mb x$ with length $k=mk_0$, the idea is to divide $\mb x$ into $m$ blocks $\mb x_1,\dots,\mb x_m$, each of length $k_0$. 
Let $\mb D=\mb I_m\otimes\mb M$, where $\mb M=(\mb I_{k_0}|\mb B)\in\mbb R^{k_0\times(k_0+\hat c)}$ and $\mb B$ contains all $\mb c\in\hat{\mcal C}$ 
as columns, then $n= m(k_0+\hat c)$.
For $\mb w\in\{\pm 1\}^k$, write $\mb w=(\mb w_1,\dots,\mb w_m)$ with $\mb w_i\in\{\pm 1\}^{k_0}$. For each $\mb w_i$, since $\mcal C$ is a covering code with radius $r$, there exists $\mb c\in\mcal C$ such that the Hamming distance between $\mb c$ and $\mb w_i$ is at most $r$. Hence $\mb w_i^{\mathrm T}\mb x_i$ can be computed by querying the node containing $\mb c^{\mathrm T}\mb x_i$ or $-\mb c^{\mathrm T}\mb x_i$ along with at most $r$ other nodes to correct the difference. As a result, $\mb w^{\mathrm T}\mb x=\sum_{i=1}^m \mb w_i^{\mathrm T}\mb x_i$ can computed by querying at most $\ell=m(r+1)$ nodes.
Moreover, if $\mcal C=\{\pm 1\}^{k_0}$, then $r=0$ and the pair $(\frac{2^{k_0-1}}{k_0}, \frac{1}{k_0})$ is feasible by simply taking $\mb{M=B}$. 
In the remainder of the paper, we will refer to this as the \emph{non-systematic} construction.\footnote{We note that taking $\mb M=(\mb I_{k_0}|\mb B)$ results in
systematic constructions, meaning that $\mb y$ contains $\mb x$ as a substring.  However, these will not be relevant for our results.}

\subsubsection{Lower Bound of \cite{ramkumar_access-redundancy_2023}}
The paper \cite{ramkumar_access-redundancy_2023} proves the following impossibility result.
\begin{theorem}[Lower Bound in \cite{ramkumar_access-redundancy_2023}]\label{thm:old_bound}
    A $\{\pm 1\}$-protocol with parameters $(k, n,\ell)$ must satisfy $\binom{n}{\ell}2^\ell\geq 2^k$.
\end{theorem}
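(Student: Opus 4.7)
The plan is to set up a covering-style counting argument: every $\mb w\in\{\pm 1\}^k$ must lie in the column span of some subset of at most $\ell$ columns of $\mb D$, and then to upper-bound how many $\pm 1$ vectors any such low-dimensional span can contain. By Definition~\ref{def:protocol}, for each of the $2^k$ columns $\mb w$ of $\mb W$ there is a column $\mb a$ of $\mb A$ with $\mb w=\mb D\mb a$ and $\|\mb a\|_0\leq \ell$. Padding the support to size exactly $\ell$ if necessary, I can associate to each $\mb w$ some $\mcal S\in\binom{[n]}{\ell}$ such that $\mb w\in\mathrm{colspan}(\mb D|_{\mcal S})$. Hence
\[
\{\pm 1\}^k\ \subseteq\ \bigcup_{\mcal S\in\binom{[n]}{\ell}}\mathrm{colspan}(\mb D|_{\mcal S}),
\]
so a union bound gives $2^k\leq \binom{n}{\ell}\cdot \max_{\mcal S}|\mathrm{colspan}(\mb D|_{\mcal S})\cap\{\pm 1\}^k|$.

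The core step is then the geometric claim: \emph{any subspace $V\subseteq \mbb R^k$ of dimension at most $\ell$ satisfies $|V\cap\{\pm 1\}^k|\leq 2^\ell$.} I would prove this by picking a basis matrix $\mb M\in\mbb R^{k\times d}$ for $V$ with $d=\dim V\leq \ell$. Since $\mathrm{rank}(\mb M)=d$, there exist $d$ coordinates $i_1,\dots,i_d\in[k]$ such that the corresponding $d\times d$ submatrix $\mb M'$ is invertible. Any $\mb v=\mb M\mb z\in V$ is therefore determined by its restriction to those $d$ coordinates via $\mb z=(\mb M')^{-1}(v_{i_1},\dots,v_{i_d})^T$; if moreover $\mb v\in\{\pm 1\}^k$, that restriction lies in $\{\pm 1\}^d$, giving at most $2^d\leq 2^\ell$ possibilities.

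Plugging the geometric bound into the union bound immediately yields $2^k\leq \binom{n}{\ell}\,2^\ell$, which is the claimed inequality. I don't anticipate a genuine obstacle here: the counting step is a one-line union bound, and the only substantive ingredient is the projection-to-$d$-coordinates argument for bounding the number of $\pm 1$ points in a low-dimensional subspace. The real work in this line of attack is done later, when sharper bounds are obtained by replacing the crude $\binom{n}{\ell}$ covering with a covering-design refinement (as will be done in Theorem~\ref{thm:lower_bound}).
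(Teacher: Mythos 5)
Your proposal is correct and follows essentially the same route as the paper: a union bound over the at most $\binom{n}{\ell}$ accessed $\ell$-subsets, combined with the fact that an $\ell$-dimensional subspace of $\mbb R^k$ contains at most $2^\ell$ vectors from $\{\pm 1\}^k$. The only difference is that the paper cites this subspace fact to prior work (\cite{RavivYaakobi_22}), whereas you supply a short (and correct) proof of it via projection onto $d$ linearly independent coordinates.
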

To prove this result, \cite{ramkumar_access-redundancy_2023} make use of the fact that
any $\ell$-dimensional $\mbb R$-subspace contains at most $2^\ell$ $\{\pm 1\}$-vectors \cite[Lemma 1]{RavivYaakobi_22}. Suppose that $(\mb D, \mb A)$ form a $\{\pm 1\}$-protocol with parameters $(k,n,\ell)$.  For each $\mb w\in\{\pm 1\}^k$, a subset of the $n$ nodes of size $\ell$ is accessed to compute $\mb w^{\mathrm T}\mb x$.\footnote{Notice that we may assume without loss of generality that exactly $\ell$ nodes are accessed, rather than up to $\ell$.}
There are at most $\binom{n}{\ell}$ such sets.  For each such set $\mcal S \subseteq [n]$, consider the subspace spanned by $\mb D|_{\mcal S}$. 
Each of these sets spans an $\ell$-dimensional subspace of $\mbb R^k$, so by the fact above, each such subspace contains at most $2^\ell$ $\{\pm 1\}$-vectors.  Since the union of all of these spans must contain all $\{\pm 1\}$-vectors, this proves Theorem~\ref{thm:old_bound}.

A comparison between the constructions and the numerical evaluation of the lower bound can be found in \cite{ramkumar_access-redundancy_2023} as well as in Figure~\ref{fig:compare}. Notice that there is a substantial gap between the constructions and lower bound. In the next section, we present an improved lower bound, narrowing the gap.

\section{Main Results}\label{sec:lowerbound}
\subsection{General Lower Bound}
Our first result is an improvement to Theorem~\ref{thm:old_bound}.  To get some intuition for how Theorem~\ref{thm:old_bound} might be improved, notice that
in the proof of Theorem~\ref{thm:old_bound}, the $\ell$-dimensional subspaces spanned by the sets $\mb D|_{\mcal S}$ may have nontrivial intersections.  Thus, bounding the size of their union by the sum of their sizes might be loose. 
To improve the bound, will focus on subsets $\mcal T$ of size $t \geq \ell$, and apply the above logic to $\mcal T$, using the fact that a single $\mcal T$ contains multiple sets $\mcal S$ of size $\ell$. 

To make this intuition precise, let $t\geq \ell$ (to be chosen later), and let $\mcal T \subseteq [n]$ be a subset of $t$ nodes. By the fact noted above, there are at most $2^t$ $\{\pm 1\}$-vectors in the span of $\mb D|_{\mcal T}$. As a result,
\begin{equation}
    \Bigg\vert \bigcup_{\mcal S\in\binom{\mcal T}{\ell}} \Big\{\{\pm 1\} \text{-vectors in the span of } \mb D|_{\mcal S} \Big\} \Bigg\vert\leq 2^t.\label{eq:key_observation}
\end{equation}
With $t$ carefully chosen, this gives a tighter bound than the union bound $\binom{t}{\ell}2^\ell$. To present our improved lower bound, we introduce the notion of a \emph{covering design}.

\begin{definition}[Covering Design]
    An $(n,t,\ell)$-covering design is a family of $t$-subsets of a given $n$-element set such that each $\ell$-subset is contained in at least one of the $t$-subsets. Let $C(n,t,\ell)$ be the minimum size of such family.
\end{definition}

We will use the following bound on the size of covering designs, from~\cite{erdos_probabilistic_1974}.
\begin{lemma}[Covering Design \cite{erdos_probabilistic_1974}]\label{lemma:covering_design} $C(n,t,\ell) \leq \big[1+\ln\binom{t}{\ell}\big]\frac{\binom{n}{\ell}}{\binom{t}{\ell}}$.
\end{lemma}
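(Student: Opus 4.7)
The plan is to use the classical Erdős–Spencer approach: combine a greedy bulk-covering step with an explicit completion step in which every remaining uncovered $\ell$-subset is handled by a single dedicated $t$-subset.

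First, I would introduce the key parameter $\alpha := \binom{t}{\ell}/\binom{n}{\ell}$, which is precisely the probability that a uniformly random $t$-subset of $[n]$ contains a fixed $\ell$-subset, since each $\ell$-subset lies in $\binom{n-\ell}{t-\ell}$ of the $\binom{n}{t}$ $t$-subsets and $\binom{n-\ell}{t-\ell}/\binom{n}{t}=\binom{t}{\ell}/\binom{n}{\ell}$. Then I would run a greedy procedure that repeatedly selects a $t$-subset covering the maximum number of still-uncovered $\ell$-subsets. A double-counting/averaging argument shows that if $U$ $\ell$-subsets remain uncovered, then some $t$-subset covers at least $U\alpha$ of them, so after $m$ greedy steps the number of uncovered $\ell$-subsets is at most $\binom{n}{\ell}(1-\alpha)^m \leq \binom{n}{\ell}e^{-m\alpha}$. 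I would finish by adding, for each remaining uncovered $\ell$-subset, an arbitrary $t$-subset containing it, producing a covering design of total size at most $m + \binom{n}{\ell}e^{-m\alpha}$.

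Finally, I would optimize $m$. Differentiating with respect to $m$ gives the real-valued minimizer $m^\star = (\ln \binom{t}{\ell})/\alpha$, at which the bulk term $m^\star$ and the completion term $\binom{n}{\ell}e^{-m^\star\alpha}=1/\alpha$ give a total of $(1+\ln\binom{t}{\ell})/\alpha = (1+\ln\binom{t}{\ell})\binom{n}{\ell}/\binom{t}{\ell}$, matching the stated bound. The only real (and minor) obstacle is integrality of $m$: rounding $m^\star$ to the nearest integer changes the estimate by an $O(1)$ additive term, which is absorbed by the bound in the regimes of interest. As a clean alternative, the greedy step can be replaced by its probabilistic counterpart — choosing $m$ i.i.d.\ uniform random $t$-subsets, bounding the expected number of uncovered $\ell$-subsets by $\binom{n}{\ell}e^{-m\alpha}$ via linearity of expectation, and then completing — yielding exactly the same final bound.
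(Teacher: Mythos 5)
Your proposal is, in substance, the same argument the paper relies on: the paper cites Erd\H{o}s--Spencer and describes the proof as ``a randomly generated covering design plus completion,'' and it reproduces exactly this argument in full detail in the proof of Claim~\ref{cl:covdesign} inside Theorem~\ref{thm:block}. Your computation of $\alpha = \binom{t}{\ell}/\binom{n}{\ell}$, the bound $\binom{n}{\ell}e^{-m\alpha}$ on the uncovered sets, the completion step, and the optimization are all correct. The one wrinkle is the integrality issue you flag at the end: with an integer number $m$ of greedy steps (or of i.i.d.\ uniform draws), the real optimizer $m^\star = (\ln\binom{t}{\ell})/\alpha$ is generally not attainable, and rounding in either direction makes $m + \tfrac{1}{\alpha}e^{-(m-m^\star)\alpha}$ strictly exceed $m^\star + \tfrac{1}{\alpha}$ (since $x + e^{-x} > 1$ for $x \neq 0$), so your argument proves the lemma only up to an additive $O(1)$ term rather than the exact inequality as stated. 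The clean fix --- and what the paper's own rendition in Claim~\ref{cl:covdesign} does --- is to use the independent-inclusion model: include each $t$-subset in $\mscr F_p$ independently with probability $p$, where $p$ is a \emph{continuous} parameter. Then the expected total size after completion is $p\binom{n}{t} + \binom{n}{\ell}e^{-p\binom{n-\ell}{t-\ell}}$, and choosing $p = \ln\binom{t}{\ell}/\binom{n-\ell}{t-\ell}$ (with the degenerate case $p>1$ handled by taking all $t$-subsets) gives exactly $\bigl[1+\ln\binom{t}{\ell}\bigr]\binom{n}{\ell}/\binom{t}{\ell}$ via the identity $\binom{n}{t}\binom{t}{\ell} = \binom{n}{\ell}\binom{n-\ell}{t-\ell}$; some realization is at most the expectation. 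For the paper's asymptotic application the $O(1)$ slack would be harmless, but for the lemma as stated you should switch to the continuous version.
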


Lemma~\ref{lemma:covering_design} is proved using probabilistic method, where a randomly generated covering design is shown to have expected size $\big[1+\ln\binom{t}{\ell}\big]\binom{n}{\ell}/\binom{t}{\ell}$. By Lemma~\ref{lemma:covering_design} and the key observation \eqref{eq:key_observation}, we obtain our improved lower bound.

\begin{theorem}[General Lower Bound]\label{thm:lower_bound}\label{thm:main}
    A $\{\pm 1\}$-protocol with parameters $(k,n,\ell)$ must satisfy, for any $t \in \{\ell, \ell+1, \ldots, n\}$,
    \begin{equation}
        2^k \leq \big[1+\ln\textstyle\binom{t}{\ell}\big]\displaystyle\frac{\binom{n}{\ell}}{\binom{t}{\ell}}2^t.\label{eq:main}
    \end{equation}
    
\end{theorem}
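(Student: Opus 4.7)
The plan is to combine the subspace-counting fact cited from~\cite{RavivYaakobi_22} with the covering-design bound of Lemma~\ref{lemma:covering_design}, replacing the naive union bound over $\ell$-subsets used to prove Theorem~\ref{thm:old_bound} by a union bound over a much smaller family of $t$-subsets that still covers every $\ell$-subset.

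First I would unpack what the protocol requires in subspace language. Since $\mb W = \mb D \mb A$ and each column $\mb a$ of $\mb A$ has $\|\mb a\|_0 \leq \ell$, writing $\mcal S \subseteq [n]$ for the support of $\mb a$ gives $\mb w = \mb D|_{\mcal S}\, \mb a|_{\mcal S}$. Hence every $\mb w \in \{\pm 1\}^k$ lies in the column span of $\mb D|_{\mcal S}$ for some $\mcal S \in \binom{[n]}{\ell}$. In particular, the union (over all $\ell$-subsets) of the $\{\pm 1\}$-vectors lying in the spans $\mathrm{span}(\mb D|_{\mcal S})$ must exhaust $\{\pm 1\}^k$, so the left-hand side of \eqref{eq:key_observation} (taken over all $\mcal S \in \binom{[n]}{\ell}$ rather than within a fixed $\mcal T$) has size exactly $2^k$.

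Next I would invoke the covering design. Fix $t \in \{\ell,\dots,n\}$ and let $\mcal F$ be an $(n,t,\ell)$-covering design of size $C(n,t,\ell)$. By definition, every $\mcal S \in \binom{[n]}{\ell}$ is contained in some $\mcal T \in \mcal F$, and then $\mathrm{span}(\mb D|_{\mcal S}) \subseteq \mathrm{span}(\mb D|_{\mcal T})$. Therefore every $\mb w \in \{\pm 1\}^k$ lies in $\mathrm{span}(\mb D|_{\mcal T})$ for some $\mcal T \in \mcal F$. Since $\mathrm{span}(\mb D|_{\mcal T})$ has dimension at most $t$, by \cite[Lemma~1]{RavivYaakobi_22} it contains at most $2^t$ $\{\pm 1\}$-vectors. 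Taking a union bound over the $|\mcal F|$ sets gives $2^k \leq C(n,t,\ell)\cdot 2^t$, which becomes the claimed inequality \eqref{eq:main} after substituting the upper bound from Lemma~\ref{lemma:covering_design}.

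There is no real obstacle: the argument is a short chain of inclusions once one spots that $\ell$-subsets can be replaced by a covering family of $t$-subsets. The only delicate point is choosing to phrase everything in terms of column spans rather than subspaces of a fixed dimension, so that the containment $\mathrm{span}(\mb D|_{\mcal S}) \subseteq \mathrm{span}(\mb D|_{\mcal T})$ for $\mcal S \subseteq \mcal T$ is literally immediate; this lets us apply the $2^t$-bound cleanly to the larger subspace without worrying about whether $\mb D|_{\mcal T}$ has full rank.
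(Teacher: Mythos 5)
Your proposal is correct and follows essentially the same route as the paper: both replace the union over all $\ell$-subsets with a union over a minimal $(n,t,\ell)$-covering design, bound the $\{\pm 1\}$-vectors in each $t$-subset's span by $2^t$ via the cited subspace-counting lemma, and finish with Lemma~\ref{lemma:covering_design}. Your explicit observation that $\mathrm{span}(\mb D|_{\mcal S}) \subseteq \mathrm{span}(\mb D|_{\mcal T})$ for $\mcal S \subseteq \mcal T$ is exactly the content of the paper's key observation \eqref{eq:key_observation}.
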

\begin{proof}
Let $\mscr F$ be a minimal $(n,t,\ell)$-covering design with size $C(n,t,\ell)$. Then 
\begin{align*}
    2^k &\leq \Bigg\vert \bigcup_{\mcal S\in\binom{[n]}{\ell}} \Big\{\{\pm 1\} \text{-vectors in the span of } \mb D|_{\mcal S} \Big\} \Bigg\vert\\
    &\leq \sum_{\mcal T\in\mscr F} \Bigg\vert \bigcup_{\mcal S\in\binom{\mcal T}{\ell}} \Big\{\{\pm 1\} \text{-vectors in the span of } \mb D|_{\mcal S} \Big\} \Bigg\vert\\
    &\leq C(n,t,\ell) \cdot 2^t \leq \big[1+\ln\textstyle\binom{t}{\ell}\big]\displaystyle\frac{\binom{n}{\ell}}{\binom{t}{\ell}}2^t.
\end{align*}
Above, the first inequality follows from the same logic as in the proof of Theorem~\ref{thm:old_bound};
the second inequality follows from the fact that $\mscr F$ is a $(n,t,\ell)$-covering design; and the third inequality follows from \eqref{eq:key_observation}.
\end{proof}
Next we evaluate this bound in the asymptotic regime where $k$ goes to infinity and $n,\ell$ are linear in $k$. Denote $\nu=\frac{n}{k}$ and $\lambda=\frac{\ell}{k}$. Take $t=r\ell-O(1)$ for some constant $r\geq 1$. We will later show which specific $r$ to choose and why it is valid.
Theorem~\ref{thm:main} implies that
\begin{align}
    1&\leq \textstyle \left[\log(1+\ln\binom{t}{\ell}) + \log\binom{n}{\ell} - \log \binom{t}{\ell}+ t\right]/k\nonumber\\
    &= \nu H(\lambda/\nu) - r\lambda H(1/r) + r\lambda + o(1),\label{eq:asymptotic_bound}
\end{align}
where $H$ is the binary entropy function and the last equality follows by the well-known bound
\[\textstyle\frac{1}{n+1}2^{nH(\ell/n)}\leq \binom{n}{\ell}\leq 2^{nH(\ell/n)}.\]
Taking $k\to\infty$, the inequality \eqref{eq:asymptotic_bound} gives
\[H(\lambda/\nu)\geq \left[1-r\big(1-H(1/r)\big)\lambda\right]/\nu.\]
Notice that $r\big(1-H(1/r)\big)$ is minimized at $r=2$ with value 0. Thus, choosing $r=2$ gives the best bound. To see that this is a valid choice, we show that it is possible to take $t\leq n$ with $t=2\ell-O(1)$. This is true because we always have $k\leq n$, and we know from \cite{ramkumar_access-redundancy_2023,HoAgrawal_98} that any protocol of interest has $\ell\leq k/2 + O(1)$ (indeed, the ``trivial'' protocol\footnote{This construction, noted in \cite{HoAgrawal_98}, is to store each $x_j$ with one additional parity check $\sum_{j=1}^k x_j$. In this construction, one can compute $\mb w^T \mb x$ for any $\mb w \in \{\pm 1\}^k$ by accessing $\ell \leq k/2+1$ nodes---the parity check along with at most half of the systematic nodes, corresponding to all the positive (or negative) coefficients of $\mb w$.} achieves the pair $(1,1/2)$, and increasing the redundancy ratio should improve the access ratio
).  
\begin{corollary}\label{cor:main}
A $\{\pm 1\}$-feasible pair $(\nu,\lambda)$ with $\lambda\leq 1/2$ must satisfy $H(\lambda/\nu)\geq 1/\nu$.
\end{corollary}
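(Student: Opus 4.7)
The plan is to apply Theorem~\ref{thm:lower_bound} asymptotically with a carefully chosen $t$. Given a feasible pair $(\nu,\lambda)$ realized by a sequence of protocols with parameters $(k_i,n_i,\ell_i)$ as in Definition~\ref{def:feasible}, I would set $t_i = r\ell_i - O(1)$ for a constant $r \geq 1$ that will be optimized at the end. This is permitted provided $\ell_i \leq t_i \leq n_i$, so part of the job is to verify the choice of $r$ is consistent with these constraints along the sequence.

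Next I would take logarithms of \eqref{eq:main} and divide through by $k$. The factor $\log\bigl[1 + \ln\binom{t}{\ell}\bigr]$ is $O(\log k)$ and hence vanishes once divided by $k$. Applying the standard estimate $\frac{1}{n+1}2^{nH(\ell/n)} \leq \binom{n}{\ell} \leq 2^{nH(\ell/n)}$ to the remaining binomials gives, after letting $i \to \infty$,
\[
1 \leq \nu H(\lambda/\nu) - r\lambda H(1/r) + r\lambda,
\]
which rearranges to $H(\lambda/\nu) \geq \bigl[1 - r(1-H(1/r))\lambda\bigr]/\nu$.

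To finish, I would optimize over $r \geq 1$ by studying $g(r) = r(1 - H(1/r))$. One checks that $g(1) = 1$, $g(2) = 0$, and $g(r) \to \infty$ as $r \to \infty$, with $r = 2$ being the global minimizer (giving $g(2) = 0$). Substituting $r = 2$ yields the claimed $H(\lambda/\nu) \geq 1/\nu$. The validity of the choice $r = 2$, i.e., $t_i = 2\ell_i - O(1) \leq n_i$, follows from the two facts noted in the excerpt: any protocol of interest satisfies $\ell \leq k/2 + O(1)$ (as witnessed by the ``trivial'' protocol), and any protocol has $n \geq k$; combining these gives $t \leq k + O(1) \leq n + O(1)$, which can be absorbed into the $O(1)$ slack in the definition of $t$. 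I expect the main subtlety to be the bookkeeping around these error terms, ensuring that the $o(1)$ corrections from the entropy bounds and the $O(1)$ shifts in $t$ are uniform in $i$ so that they genuinely vanish in the limit; once that is in place, the rest is a direct substitution and optimization.
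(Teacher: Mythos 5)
Your proposal is correct and follows essentially the same route as the paper: apply Theorem~\ref{thm:lower_bound} with $t = r\ell - O(1)$, pass to the asymptotic form via the entropy bounds on binomial coefficients, and minimize $r(1-H(1/r))$ at $r=2$, with validity of $t \leq n$ justified by $\ell \leq k/2 + O(1)$ and $n \geq k$. The only difference is that you flag the uniformity of the error terms as a potential subtlety, which the paper passes over silently; this is a harmless refinement rather than a divergence.
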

A numerical evaluation of Corollary~\ref{cor:main} is given in Figure~\ref{fig:compare}, along with the constructions and lower bound in \cite{ramkumar_access-redundancy_2023} for comparison. It is clear that our new lower bound represents substantial progress in closing the gap between the lower bound and the constructions from \cite{ramkumar_access-redundancy_2023}. 
For the pair $(\nu, \lambda) = (1,1/2)$, the lower bound matches the ``trivial''  construction.
However, there is still a gap between the best known construction and the lower bound for $\nu > 1$.  

\begin{figure}[ht]    
    \centering
    \includegraphics[scale=0.64]{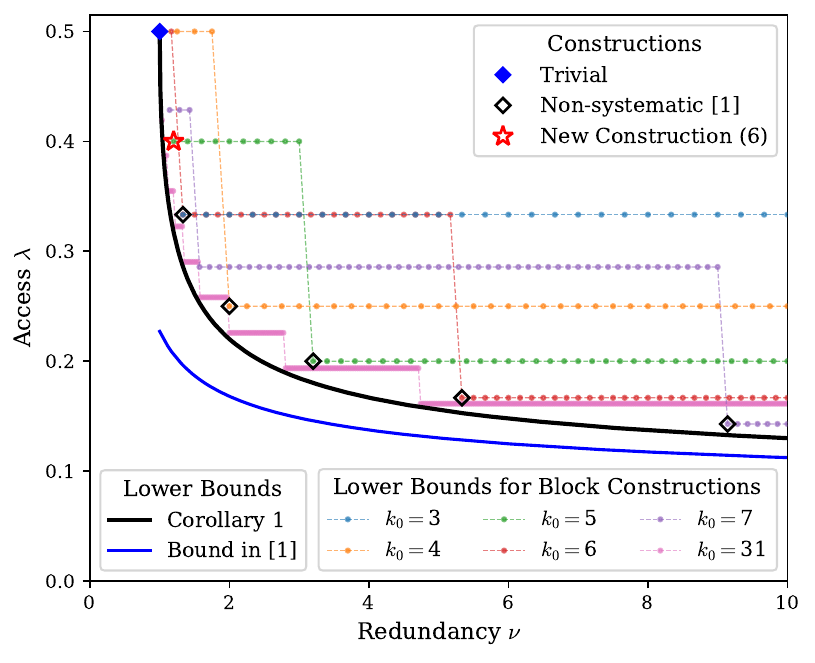}
    \caption{Numerical evaluation of the lower bounds and comparison to lower bound and the best constructions in \cite{ramkumar_access-redundancy_2023}. 
    The lower bounds for block constructions consist of dots only, the dashed lines are added for better readability.}
    \label{fig:compare}
\end{figure}

\subsection{Lower Bound for Block Constructions}
Recall from Section~\ref{sec:related} that the constructions in \cite{ramkumar_access-redundancy_2023} are 
based on dividing the whole length-$k$ vector $\mb x$ into $m$ blocks of size $k_0$, i.e. $\mb x=(\mb x_1,\dots,\mb x_m)$, and 
taking $\mb D=\mb I_m\otimes\mb M$ for some $\mb M\in\mbb R^{k_0\times n_0}$.
Moreover, for $\mb w=(\mb w_1,\dots,\mb w_m)$ with $\mb w_i\in\{\pm 1\}^{k_0}$, each $\mb w_i^{\mathrm T}\mb x_i$ is computed by querying at most $\ell_0$ entries of $\mb M^{\mathrm T}\mb x_i$.
In other words, each column of $\mb A$ can be written as $(\mb a_1,\dots, \mb a_m)$, where $\mb a_i\in \mbb R^{n_0}$ and $\lnorm \mb a_i\rnorm_0\leq \ell_0$.
Using this construction, we have $n=mn_0$ and $\ell=m\ell_0$. As $m$ goes to infinity, the pair $(\nu,\lambda)=(\frac{n_0}{k_0}, \frac{\ell_0}{k_0})$ is shown to be feasible. 
In the following, we refer to this type of construction as a \emph{block construction} with parameters $(k_0,n_0,\ell_0)$.

In \cite{ramkumar_access-redundancy_2023}, some specific block constructions are proposed. Here, we discuss whether there can be any possible improvements using this method.  

With similar techniques for the overall lower bound, we derive a lower bound for block constructions, which will show that in fact the non-systematic constructions of \cite{ramkumar_access-redundancy_2023} are optimal among block constructions. 
The intuition for the improved lower bound is as follows.  Using the block construction, the number of $\ell$-subsets of nodes that are used is at most $\binom{n_0}{\ell_0}^m$, which is much smaller than $\binom{n}{\ell} = \binom{mn_0}{m\ell_0}$, so the resulting lower bound should be strictly larger than the general one.

\begin{theorem}[Lower Bound for Block Constructions]\label{thm:block}
    A block construction of a $\{\pm 1\}$-protocol with parameters $(k_0,n_0,\ell_0)$ and $2\ell_0\leq n_0$ must satisfy
    $2\ell_0 + \log\binom{n_0}{\ell_0} - \log\binom{2\ell_0}{\ell_0} \geq k_0$.
\end{theorem}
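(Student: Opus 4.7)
The plan is to mimic the proof of Theorem~\ref{thm:main} with the choice $t = 2\ell_0$, but to replace the invocation of Lemma~\ref{lemma:covering_design} by a direct double-counting argument. This removes the $1+\ln\binom{2\ell_0}{\ell_0}$ overhead that Lemma~\ref{lemma:covering_design} introduces and recovers exactly the claimed inequality.

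First I would translate the block construction into the language of subspaces. A block construction with parameters $(k_0,n_0,\ell_0)$ is specified by $\mb M \in \mbb R^{k_0\times n_0}$ such that every $\mb w_0 \in \{\pm 1\}^{k_0}$ can be written as $\mb M\mb a_0$ for some $\mb a_0$ with $\lnorm\mb a_0\rnorm_0 \leq \ell_0$; equivalently, each $\mb w_0$ comes with an $\ell_0$-subset $\mcal S(\mb w_0) \in \binom{[n_0]}{\ell_0}$ satisfying $\mb w_0 \in \mathrm{span}(\mb M|_{\mcal S(\mb w_0)})$.

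Next I would set up a double count of the set $\mcal P$ of pairs $(\mb w_0, \mcal T)$ with $\mb w_0 \in \{\pm 1\}^{k_0}$, $\mcal T \in \binom{[n_0]}{2\ell_0}$, and $\mb w_0 \in \mathrm{span}(\mb M|_{\mcal T})$. Grouping by $\mcal T$ and using the subspace fact from \cite{RavivYaakobi_22} that any subspace of dimension at most $2\ell_0$ contains at most $2^{2\ell_0}$ $\{\pm 1\}$-vectors yields $|\mcal P| \leq \binom{n_0}{2\ell_0}\cdot 2^{2\ell_0}$. Grouping instead by $\mb w_0$ and noting that every $2\ell_0$-superset of $\mcal S(\mb w_0)$ contributes (of which there are $\binom{n_0-\ell_0}{\ell_0}$, a count that is well-defined precisely because $2\ell_0 \leq n_0$) gives $|\mcal P| \geq 2^{k_0}\binom{n_0-\ell_0}{\ell_0}$.

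Finally I would combine the two inequalities and apply the elementary identity $\binom{n_0}{2\ell_0}/\binom{n_0-\ell_0}{\ell_0} = \binom{n_0}{\ell_0}/\binom{2\ell_0}{\ell_0}$ (both sides count pairs of disjoint $\ell_0$-subsets of $[n_0]$) to obtain $2^{k_0} \leq 2^{2\ell_0}\binom{n_0}{\ell_0}/\binom{2\ell_0}{\ell_0}$, which upon taking logarithms is exactly the claimed bound. There is no serious obstacle; the only subtle point is that the lower half of the double count relies on assigning a \emph{single} fixed $\mcal S(\mb w_0)$ to each $\mb w_0$ and counting all of its $2\ell_0$-supersets uniformly, which is precisely what shaves off the logarithmic factor present in the general-purpose covering-design bound.
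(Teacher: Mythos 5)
Your proof is correct, and it takes a genuinely different route from the paper's. The paper proves the theorem by working with the full $m$-block protocol: it restricts attention to the collection $\mscr A(n_0,\ell_0,m)$ of per-block query sets, builds (via the probabilistic method, as in Lemma~\ref{lemma:covering_design}) a covering family $\mscr F\subseteq\mscr A(n_0,2\ell_0,m)$ of size at most $\big[1+m\ln\binom{2\ell_0}{\ell_0}\big]\binom{n_0}{\ell_0}^m/\binom{2\ell_0}{\ell_0}^m$, deduces $k\leq\log|\mscr F|+2\ell$, and then eliminates the logarithmic overhead by dividing by $m$ and letting $m\to\infty$. You instead work entirely within a single block and replace the covering design by a double count of the pairs $(\mb w_0,\mcal T)$ with $\mb w_0\in\mathrm{span}(\mb M|_{\mcal T})$, exploiting the regularity that every $\ell_0$-subset of $[n_0]$ lies in exactly $\binom{n_0-\ell_0}{\ell_0}$ sets of size $2\ell_0$; the identity $\binom{n_0}{2\ell_0}\binom{2\ell_0}{\ell_0}=\binom{n_0}{\ell_0}\binom{n_0-\ell_0}{\ell_0}$ you invoke is correct, and the hypothesis $2\ell_0\leq n_0$ enters exactly where you say it does. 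What your approach buys is the elimination of both the probabilistic construction and the asymptotics in $m$: the averaging inherent in double counting plays the role of a fractional covering and removes the $1+\ln\binom{2\ell_0}{\ell_0}$ factor outright rather than amortizing it away as $m\to\infty$. In fact your argument shows the inequality is forced already by the single-block matrix $\mb M$ (i.e., by the $m=1$ instance), which is marginally stronger than the paper's statement about the infinite family; the final bounds coincide.
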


\begin{proof}
Given the number of blocks $m\in\mbb N$, we have $k=mk_0, n=mn_0, \ell=m\ell_0$.
Let $\mscr A(n_0,\ell_0,m)\subseteq\binom{[n]}{\ell}$ be the collection of $\ell$-subsets such that there are $\ell_0$ elements in each block of size $n_0$. These are exactly the subsets of nodes that are possibly used in the block construction, and we know that $|\mscr A(n_0,\ell_0,m)|=\binom{n_0}{\ell_0}^m$.
Analogous to the covering designs in the proof of Theorem~\ref{thm:lower_bound}, we make the following claim.
\begin{claim}\label{cl:covdesign}
There exists a family $\mscr F\subseteq\mscr A(n_0,2\ell_0,m)$ 
with
\[|\mscr F|\leq \big[1+m\ln \textstyle\binom{2\ell_0}{\ell_0}\big]\displaystyle\frac{\binom{n_0}{\ell_0}^m}{\binom{2\ell_0}{\ell_0}^m}\]
such that $\forall\,\mcal S\in\mscr A(n_0,\ell_0,m)$, $\mcal S\subseteq\mcal T$ for some $\mcal T\in\mscr F$.
\end{claim}
Before we prove the claim, we see how it implies the theorem.
Using the same argument as in the proof of Theorem~\ref{thm:lower_bound}, we have 
\begin{align*}
    2^k &\leq \Bigg| \bigcup_{\mcal S \in \mscr A(n_0, \ell_0, m)} \hspace{-0.07cm}\Big\{ \{\pm1\}\text{-vectors in the span of } \mb D|_{\mcal S} \Big\} \Bigg| \\
    &\leq \sum_{\mcal T \in \mscr F} \Bigg| \bigcup_{\substack{\mcal S \subseteq \mcal T \\  \mcal{S} \in \mscr A(n_0, \ell_0, m)}}\hspace{-0.25cm}\Big\{ \{\pm1\}\text{-vectors in the span of } \mb D|_{\mcal S} \Big\} \Bigg| \\
    &\leq |\mscr F| 2^{2\ell},
\end{align*}
where the last line follows from \eqref{eq:key_observation}, as $|\mcal T| = 2\ell$ for all $\mcal T \in \mscr F$.  This implies that
$k\leq \log|\mscr F|+2\ell$. Since this should hold for arbitrarily large $m$, plugging in the expression from Claim~\ref{cl:covdesign}, we obtain the desired result. 

It remains to prove the Claim~\ref{cl:covdesign}.  We do so using the probabilistic method, similar to the proof Lemma~\ref{lemma:covering_design} in \cite{erdos_probabilistic_1974}. Let 
\[a = m\ln \textstyle\binom{2\ell_0}{\ell_0}\displaystyle\frac{\binom{n_0}{\ell_0}^m}{\binom{2\ell_0}{\ell_0}^m} \quad\text{and}\quad p = \frac{a}{\binom{n_0}{2\ell_0}^m}.\]
If $a \geq \binom{n_0}{2\ell_0}^m$, simply take $\mscr F=\mscr A(n_0,2\ell_0,m)$. Otherwise, randomly construct $\mscr F_p$ where each $\mcal T\in\mscr A(n_0,2\ell_0,m)$ is chosen independently with probability $p$. 
Thus $\E[|\mscr F_p|]=a$. 
Let $u(\mscr F_p) = \lbp \mcal S\in\mscr A(n_0,\ell_0,m) \mv \forall\,\mcal T\in\mscr F_p,\ \mcal S\not\subseteq\mcal T \rbp$ denote the uncovered sets. Then
\begin{align*}
    \Pr\{\mcal S\in u(\mscr F_p)\} &= (1-p)^{\binom{n_0-\ell_0}{\ell_0}^m}\leq e^{-p\binom{n_0-\ell_0}{\ell_0}^m},\\
    \E[|u(\mscr F_p)|]&\leq \textstyle\binom{n_0}{\ell_0}^m e^{-p\binom{n_0-\ell_0}{\ell_0}^m}.
\end{align*}
Let $v:\mscr A(n_0,\ell_0,m)\to\mscr A(n_0,2\ell_0,m)$ be a function such that $\mcal S\subseteq v(\mcal S)$ for all $\mcal S\in\mscr A(n_0,\ell_0,m)$. Then $\mscr F_p^* = \mscr F_p\cup \lbp v(\mcal S) \mv \mcal S\in u(\mscr F_p)\rbp$ covers all $\mcal S\in\mscr A(n_0,\ell_0,m)$ and 
\[\E[|\mscr F_p^*|]\leq a + \textstyle\binom{n_0}{\ell_0}^m e^{-p\binom{n_0-\ell_0}{\ell_0}^m} = \big[1+m\ln \textstyle\binom{2\ell_0}{\ell_0}\big]\displaystyle\frac{\binom{n_0}{\ell_0}^m}{\binom{2\ell_0}{\ell_0}^m}.\]
It suffices to pick some $\mscr F_p^*$ such that $|\mscr F_p^*|\leq \E[|\mscr F_p^*|]$.
This proves Claim~\ref{cl:covdesign} and hence the theorem.
\end{proof}

Numerical evaluations of the lower bound in Theorem~\ref{thm:block} for different block sizes are also provided in Figure~\ref{fig:compare}. 
The plot shows step behavior due to the restriction to block constructions where each block has the same structure $(k_0,n_0,\ell_0)$. In other words, for a fixed $k_0$, interpolation between the points will require different $n_0$ and $\ell_0$ for different blocks.
Notice that for block constructions with small block size $k_0$, it is impossible to have constructions that match the overall lower bound. This suggests that to close the gap between the lower bound and constructions, we have to either come up with a scheme that uses different approaches or further improve the lower bound.
To the best of our knowledge, the question of how to close this gap remains open.

Note that the non-systematic constructions in \cite{ramkumar_access-redundancy_2023} match a corner point on the lower bounds for the given block size.  
An interesting question is whether \emph{other} points on the lower bounds are also achievable.
One novel example of such a block construction is when $k_0=5$ with $n_0=6$ and
\begin{equation}\label{eq:new_construction}
    \mb M=\begin{pmatrix}
1 & -3 & 1 & 1 & 1 & 1\\
1 & 1 & -3 & 1 & 1 & 1\\
1 & 1 & 1 & -3 & 1 & 1\\
1 & 1 & 1 & 1 & -3 & 1\\
1 & 1 & 1 & 1 & 1 & -3
\end{pmatrix}
\end{equation}
It can be easily checked that each $\mb w_i\in\{\pm 1\}^5$ is in the span of at most $\ell_0=2$ columns of $\mb M$, resulting in the feasible pair $(6/5, 2/5)$. (We note that this feasible pair is the linear combination of two known feasible pairs $(1,1/2)$ and $(4/3, 1/3)$, so it does not improve the achievable region; but it does show that $(6/5, 2/5)$ can be achieved with a \emph{block construction}.)
In general, it remains open whether there exist block constructions that match the lower bounds everywhere.

\section{Approximate Recovery}\label{sec:approx}
In this section, instead of exactly computing $f(\mb x)=\mb w^{\mathrm T}\mb x$, we slightly relax the requirement. 
Given $\varepsilon\in(0,1)$, for $\mb w\in\{\pm 1\}^k$,
now the goal is to output some $z$ such that $|\mb w^{\mathrm T}\mb x - z|^2\leq \varepsilon k \lnorm\mb x\rnorm_2^2$.
Without loss of generality, assume $\lnorm\mb x\rnorm_2=1$. Again we focus on linear schemes.
It suffices to design an encoding matrix $\mb D$ and a recovery matrix $\mb A$ so that each column of $\mb{W-DA}$ has squared $\ell_2$-norm at most $\varepsilon k$. To see this, recall that $\mb y=\mb D^{\mathrm T} \mb x$.
Given $\mb w\in\{\pm 1\}^k$, which is a column of $\mb W$, let $\mb a$ be the corresponding column of $\mb A$.  Then the scheme will output $z=\mb a^{\mathrm T}\mb y$, and we have
\[|\mb w^{\mathrm T}\mb x - z|^2 = |\mb w^{\mathrm T}\mb x - \mb a^{\mathrm T}\mb D^{\mathrm T} \mb x|^2 \leq \lnorm\mb {w - Da}\rnorm_2^2\lnorm\mb x\rnorm_2^2\leq \varepsilon k.\]
Hence in the following we focus on designing $\mb D$ and $\mb A$ to approximate $\mb W$ in this sense.
Similar as before, we call this an $\varepsilon$-approximation protocol with parameters $(k,n,\ell)$. The goal is to characterize the region of feasible pairs for $\varepsilon$-approximation protocols.

\subsection{Constructions}
\subsubsection{Covering Codes}
Using the same idea as in \cite{ramkumar_access-redundancy_2023}, we have the following proposition.
\begin{proposition}
Given a covering code $\mcal C$ with length $k_0$ and covering radius $r\geq 1$ over alphabet $\{\pm 1\}$, and an integer $b\in\{0,\dots, r-1\}$, then the pair $(\frac{k_0+\hat c}{k_0}, \frac{r-b+1}{k_0})$ is feasible with $\varepsilon\leq\frac{4b(k_0-r)}{k_0(k_0-r+b)}$. Recall that $\hat{c} = |\hat{\mcal C}|$, and $\hat{\mcal{C}} \subseteq \mcal C$ is a code with exactly one of $\pm \mb c$ for all $\mb c\in \mcal C$.  
Further, the pair $(\frac{\hat c}{k_0}, \frac{1}{k_0})$ is feasible with $\varepsilon\leq\frac{4r(k_0-r)}{k_0^2}$.
\end{proposition}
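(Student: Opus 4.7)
The plan is to mirror the block construction of \cite{ramkumar_access-redundancy_2023} but relax the decoder so that a few disagreement positions in each block are left uncorrected. Concretely, I would divide $\mb x \in \mbb R^{mk_0}$ into $m$ blocks $\mb x_1,\dots,\mb x_m$ of length $k_0$ and take $\mb D = \mb I_m \otimes \mb M$ with $\mb M = (\mb I_{k_0}\mid\mb B)$, where the columns of $\mb B$ are the codewords in $\hat{\mcal C}$. For each target $\mb w_i \in \{\pm 1\}^{k_0}$ I would pick $\pm\mb c \in \mcal C$ at Hamming distance $d_i \leq r$ from $\mb w_i$, and decode $\mb w_i^{\mathrm T}\mb x_i$ by accessing the single codeword entry $\pm\mb c^{\mathrm T}\mb x_i$ together with at most $r-b$ systematic entries $x_{i,j}$ for $j\in Q_i$, where $Q_i$ is a subset of the $d_i$ disagreement positions of size $|Q_i|=\min(d_i,r-b)$. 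This uses $r-b+1$ nonzeros per block in the corresponding column of $\mb A$, producing redundancy ratio $(k_0+\hat c)/k_0$ and access ratio $(r-b+1)/k_0$ as claimed.

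To bound the error I would let $\alpha$ denote the coefficient on the codeword entry and $\beta_j$ the coefficient on $x_{i,j}$ for $j\in Q_i$. For any fixed $\alpha$ the $\beta_j$ can be chosen so that the reconstruction agrees with $\mb w_i$ exactly on $Q_i$, so the squared residual for block $i$ reduces to $\sum_{j\notin Q_i}(w_{i,j}-\alpha c_j)^2$. Splitting positions outside $Q_i$ into $|A|=k_0-d_i$ agreements (where $w_{i,j}c_j=+1$) and $|D|=d_i-|Q_i|$ uncorrected disagreements (where $w_{i,j}c_j=-1$), the squared residual becomes $|A|(1-\alpha)^2+|D|(1+\alpha)^2$, minimized at $\alpha^{*}=(|A|-|D|)/(|A|+|D|)$ with minimum value $4|A||D|/(|A|+|D|)$. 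The main obstacle (and the key idea) is noticing that one must take $\alpha\neq 1$; the naive choice $\alpha=1$ would only yield per-block error $4|D|=4b$ and the looser bound $\varepsilon\leq 4b/k_0$.

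Plugging in the worst case $d_i=r$ and $|Q_i|=r-b$ gives $|A|=k_0-r$, $|D|=b$, and per-block squared error $4b(k_0-r)/(k_0-r+b)$. Summing over the $m$ blocks and dividing by $k=mk_0$ (using $\|\mb x\|_2=1$) yields $\varepsilon \leq \frac{4b(k_0-r)}{k_0(k_0-r+b)}$ as claimed. For the second statement I would repeat the argument with $\mb M=\mb B$ and $Q_i=\emptyset$ (no systematic coordinates available for corrections), so $|D|=d_i$, $|A|=k_0-d_i$, and per-block error $4d_i(k_0-d_i)/k_0$; taking the worst case $d_i=r$ (which dominates after absorbing signs via $\pm\mb c$ so that effectively $d_i\leq k_0/2$) and summing yields $\varepsilon\leq \frac{4r(k_0-r)}{k_0^2}$, completing the proof. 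Beyond the $\alpha\neq 1$ observation, the remainder is a single-variable quadratic minimization and routine bookkeeping.
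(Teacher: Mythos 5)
Your proposal is correct and follows essentially the same route as the paper: the same block construction with $\mb M = (\mb I_{k_0}\mid \mb B)$, correcting only $r-b$ of the at most $r$ disagreement positions per block, and optimizing the scalar coefficient $\alpha$ on the codeword node (the paper's choice $\beta=\alpha+1$, $\alpha=\frac{k_0-r-b}{k_0-r+b}$ is exactly your minimizer $\alpha^*=(|A|-|D|)/(|A|+|D|)$ with $|A|=k_0-r$, $|D|=b$). Your write-up just makes the quadratic minimization explicit where the paper states the answer, so no substantive difference.
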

\begin{proof}
For $b\in\{0,\dots, r-1\}$, we construct a block construction following the scheme of~\cite[Theorem 2]{ramkumar_access-redundancy_2023}, which is also described in Section~\ref{subsubsec:construct}.
Suppose we have $m$ blocks, and $k=mk_0$.  Then for each block, instead of querying an additional $r$ nodes to correct all the errors, as is done in \cite{ramkumar_access-redundancy_2023}, we query at most $r-b$ additional nodes. 
Let $\bar{\mb c} \in \{\pm 1\}^k$ be the element of $\{ (\mb c_1, \mb c_2, \ldots, \mb c_m) \mid \mb c_i \in \mcal C \ \forall\, i \in [m]\}$ that is closest to $\mb w$. Let $\mcal I\subseteq[k]$ with $|\mcal I|\leq m(r-b)$ be a subset of the indices 
on which $\mb w$ and $\bar{\mb c}$ differ. Now we approximate $\mb w^{\mathrm T}\mb x$ with a linear combination of the $m$ nodes containing $\mb c_i^{\mathrm T}\mb x_i$ or $-\mb c_i^{\mathrm T}\mb x_i$, along with the nodes containing $x_j=\mb e_j^{\mathrm T}\mb x$ for $j\in\mcal I$.
Since $w_i\in\{\pm 1\}$, it makes sense to restrict the coefficients to be identical to each other within the first part and the second part, and 
$\varepsilon$ is obtained by solving $\varepsilon k\leq\max_{\mb w} \min_{\alpha,\beta}\big\lVert \alpha\bar{\mb c} + \beta \sum_{j\in\mcal I}\mb e_j -\mb w \big\lVert_2^2$. 
Note that the Hamming distance between $\mb w$ and $\bar{\mb c}$ is $mr$ in the worst case, solving the proposed optimization problem suggests taking $\beta=\alpha+1$ with $\alpha=\frac{k_0-r-b}{k_0-r+b}$. 
For the pair $(\frac{\hat c}{k_0}, \frac{1}{k_0})$, the scheme is constructed by not querying any additional nodes, thus no need to store each $x_j$. The corresponding $\varepsilon$ is calculated using the same arguments.
\end{proof}

\subsubsection{Discarding Blocks}
This method is again modified from \cite{ramkumar_access-redundancy_2023}. However, it applies to any block construction, not just those where the matrix $\mb M$ is found via covering codes.
\begin{proposition}
    Given a family of protocols using a block construction, with feasible pair $(\nu,\lambda)$, there is a family of $\varepsilon$-approximation protocols with feasible pair $\big((1-\varepsilon)\nu,(1-\varepsilon)\lambda\big)$.
\end{proposition}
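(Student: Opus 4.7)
The plan is to take the given block construction and use it on only a $(1-\varepsilon)$ fraction of the blocks, simply ignoring the rest. Concretely, suppose the family of block constructions with feasible pair $(\nu,\lambda)$ has parameters $(k_i, n_i, \ell_i) = (m_i k_0, m_i n_0, m_i \ell_0)$, built from some fixed $\mb M \in \mbb R^{k_0 \times n_0}$ so that $\mb D_i = \mb I_{m_i} \otimes \mb M$. Set $m_i' = \lceil (1-\varepsilon) m_i \rceil$ and define the new encoding matrix $\mb D_i' \in \mbb R^{k_i \times m_i' n_0}$ to encode only the first $m_i'$ blocks of $\mb x$ (the bottom $(m_i - m_i') k_0$ rows of $\mb D_i'$ are zero). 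For any $\mb w = (\mb w_1, \ldots, \mb w_{m_i}) \in \{\pm 1\}^{k_i}$, the decoder simply uses the original block decoding on the first $m_i'$ blocks and outputs $z = \sum_{i=1}^{m_i'} \mb w_i^\mathrm{T} \mb x_i$, which requires at most $m_i' \ell_0$ queries.

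The approximation guarantee follows from a one-line Cauchy–Schwarz. Writing $\bar{\mb w}$ and $\bar{\mb x}$ for the parts of $\mb w$ and $\mb x$ corresponding to the discarded blocks,
\[
|\mb w^\mathrm{T} \mb x - z|^2 = |\bar{\mb w}^\mathrm{T} \bar{\mb x}|^2 \leq \|\bar{\mb w}\|_2^2 \, \|\bar{\mb x}\|_2^2 = (m_i - m_i') k_0 \cdot \|\bar{\mb x}\|_2^2 \leq \varepsilon k_i \|\mb x\|_2^2,
\]
where the last inequality uses $m_i - m_i' \leq \varepsilon m_i$ and $\|\bar{\mb x}\|_2 \leq \|\mb x\|_2$. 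This is exactly the $\varepsilon$-approximation requirement.

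Finally, verifying the claimed feasible pair: as $m_i \to \infty$,
\[
\frac{m_i' n_0}{m_i k_0} \to (1-\varepsilon)\nu \qquad \text{and} \qquad \frac{m_i' \ell_0}{m_i k_0} \to (1-\varepsilon)\lambda,
\]
so the family $\{(k_i, m_i' n_0, m_i' \ell_0)\}$ witnesses the feasibility of the pair $\bigl((1-\varepsilon)\nu, (1-\varepsilon)\lambda\bigr)$ for $\varepsilon$-approximation protocols.

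There is no real obstacle here: the construction is purely a trimming argument, and the only minor point to verify is the integer rounding of $m_i'$, which is absorbed in the limit. The key observation that makes the proof work cleanly is that since the discarded $\bar{\mb w}$ has entries in $\{\pm 1\}$, its squared $\ell_2$-norm is exactly the number of discarded coordinates, giving the tight factor of $\varepsilon k$ required by the approximation criterion.
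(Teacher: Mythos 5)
Your proof is correct and takes exactly the approach of the paper, whose entire proof is the single sentence ``when the number of blocks is $m$, simply discard $\lfloor\varepsilon m\rfloor$ blocks''; you have merely filled in the error bound and the limit computation that the paper leaves implicit. The key point you identify---that the discarded column of $\mb W - \mb{DA}$ has squared norm exactly equal to the number of discarded coordinates because $\mb w \in \{\pm 1\}^k$---is precisely why the paper's one-liner works.
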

\begin{proof}
    When the number of block is $m$, simply discard $\lfl\varepsilon m\rfl$ blocks.
\end{proof}

\subsubsection{K-SVD Algorithm}
Our final approach leverages the 
 K-SVD algorithm of~\cite{aharon_k-svd_2006}, which aims to construct overcomplete dictionaries for sparse representations.  This algorithm can be used to approximate $\min_{\mb{D,A}} \lnorm \mb{W-DA} \rnorm_F^2$
subject to each column of $\mb A$ having $\ell_0$-norm at most $\ell$. We run the K-SVD algorithm on our $\mb W$, for small values of $k, n, \ell$. The resulting $\varepsilon$ is computed as the worst-case approximation error among all $\mb w$, i.e. $\max_{\mb w}\lnorm\mb {w - Da}\rnorm_2^2$.  Then, using a block construction, we can generate an infinite family of protocols.  Our results are shown in Figure~\ref{fig:approx}. Notice that the performance does not seem as good as the other constructions; this may be because the objective function of the K-SVD algorithm can be viewed as the average error over $\mb w$, while $\varepsilon$ is determined by the worst case.

\subsection{Results}
Figure~\ref{fig:approx} shows the constructions using the three methods outlined above. Observe that using the second method, for $\varepsilon=0.1$, it is possible to exceed the lower bound for exact construction.

\begin{figure}[ht]    
    \centering
    \includegraphics[scale=0.64]{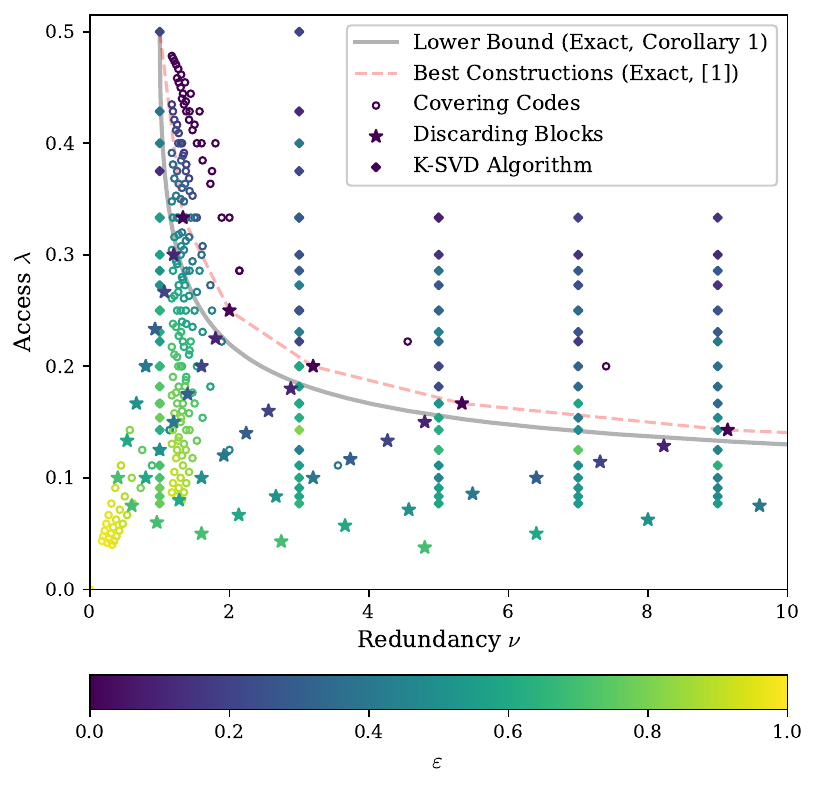}
    \caption{Our three approaches for approximate recovery. 
    }
    \label{fig:approx}
\end{figure}

\section*{Acknowledgments}
M.W. is partially funded by NSF grants CCF-2231157 and CNS-2321489.

\IEEEtriggeratref{6}
\bibliographystyle{IEEEtran}
\bibliography{./bib/Ref.bib}

\clearpage
\onecolumn


\end{document}